\documentclass[10pt]{article}
\usepackage[utf8]{inputenc}

\usepackage{amssymb,amsthm,amsmath}
\usepackage{mathrsfs}
\usepackage{enumerate}
\usepackage[scr=boondox]{mathalpha}
\usepackage{graphicx,xcolor}
\usepackage{setspace}
\usepackage[hidelinks]{hyperref}
\newcommand{\dd}{\mathrm{d}}
\newcommand{\E}{\mathbb{E}}

\newcommand{\1}{\textbf{1}}
\newcommand{\R}{\mathbb{R}}

\newcommand{\cE}{\mathscr{E}}

\newcommand{\p}[1]{\mathbb{P}\left( #1 \right)}
\newcommand{\scal}[2]{\left\langle #1, #2 \right\rangle}

\DeclareMathOperator{\vol}{vol}

\newtheorem{theorem}{Theorem}
\newtheorem{lemma}{Lemma}
\newtheorem{corollary}{Corollary}

\theoremstyle{remark}
\newtheorem{remark}{Remark}

\theoremstyle{definition}

\title{\vspace{-3em}
Hermite--Fisher bounds and stability for min-entropy power inequalities
}

\author{Silouanos Brazitikos$^\dagger$, \ Martin Rapaport$^\S$, \
Tomasz Tkocz$^\S$\footnote{Email: ttkocz@math.cmu.edu. 
Research supported in part by 
Simons Foundation International Travel Support for Mathematicians grant SFI-MPS-TSM-00025936.}
}

\date{\begin{normalsize}
\emph{$^\dagger$Department of Mathematics and Applied Mathematics, University of Crete, 70013 Heraklion, Crete, Greece.} \\\vspace*{0.7em}
\emph{$^\S$Department of Mathematical Sciences, Carnegie Mellon University, Pittsburgh, PA 15213, USA}
\end{normalsize}}

\begin{document}

\maketitle

\begin{abstract}
We derive explicit lower bounds for relative Fisher information by combining a variational principle with suitably orthogonalized  Hermite-polynomial test functions. The resulting cumulant bounds are asymptotically sharp and yield lower bounds for Gaussian entropy deficits. We also establish quantitative versions of sharp min-entropy power inequalities in all dimensions. En route, we develop a stability result for Brzezinski's sharp bound for block sections of products of Euclidean balls, which may be of independent interest. 
\end{abstract}

\bigskip

\begin{footnotesize}
\noindent {\em 2020 Mathematics Subject Classification.} Primary 94A17; Secondary 60E15, 52A40.

\noindent {\em Key words.} Relative Fisher information, Hermite polynomials, cumulants, entropy power inequalities, min-entropy, stability, sections of convex bodies.
\end{footnotesize}

\bigskip

\section{Introduction}
The interaction between entropy, Fisher information and Gaussian approximation has a long history, going back to de Bruijn's identity and Stam's information-theoretic proof of the entropy-power inequality \cite{Stam}. Moment-based estimates for Fisher information were studied by Jarrett \cite{Jarr}, while Edgeworth expansions show that the first non-vanishing cumulants govern the leading asymptotic deviation from Gaussianity \cite{BCG-Fisher}. This naturally motivates the search for direct lower bounds on relative Fisher information in terms of a few low-order cumulants.

This paper has two main themes, both concerning sharp estimates for fundamental information-theoretic quantities.

First, we derive a simple general lower bound for relative Fisher information from an elementary variational principle. The bound is particularly well suited to weighted sums of independent random variables, since it depends only on a few low-order moments of the summands. 
The study of important probabilistic characteristics of such sums has attracted considerable attention, in a variety of contexts, ranging from analytic, or geometric in nature (like concentration), \cite{BCG-book, BLM, Led} to those more aligned with the theme of this work, see, e.g. \cite{ABBN1, ABBN2, EG, ENT-GM}.
We apply the method to centred exponential random variables and to coordinate marginals of the uniform distribution on a sphere, obtaining explicit asymptotically sharp estimates.  Integrating these estimates along the Ornstein--Uhlenbeck flow also yields explicit lower bounds for the Gaussian entropy deficit.

%The key feature of our approach is that, by choosing Hermite polynomials as test functions in the variational principle, the resulting bounds can be expressed explicitly in terms of low-order cumulants. Since cumulants are additive under convolution, this gives effective estimates for weighted sums of independent random variables. In particular, the method detects the cancellation of the third cumulant caused by the coefficients and shows how, in that case, the fourth cumulant produces the leading contribution to the relative Fisher information.

Second, we turn to entropy-power inequalities beyond the Shannon setting. Shannon's EPI established a fundamental superadditivity principle for entropy under convolution and became a model for a much broader programme. R\'enyi's one-parameter family of $p$-entropies naturally led to the search for corresponding inequalities at other entropy orders, and Bobkov and Chistyakov obtained such a family for all $p>1$, including, at the endpoint $p=\infty$, a min-entropy power inequality, or equivalently an anti-concentration estimate \cite{BCh}. 
The optimal constants at this endpoint were subsequently identified in dimension one by Bobkov and Chistyakov and in arbitrary dimension by Madiman, Melbourne
and Xu, while Melbourne and Roberto initiated the accompanying study of
stability and equality in the scalar case \cite{BCh-max, MMX, MR}. 

Once the sharp constants are known, the natural next step is to understand rigidity: which configurations attain equality, and how does the deficit control deviation from extremality? We pursue this question by establishing global quantitative deficit estimates for the sharp min-entropy power inequality in every dimension, characterising the equality cases in dimensions $1$ and $2$, and quantifying the
strict non-attainment for finite sums in dimensions $d\geq3$.

The paper is organised as follows. In
Section~\ref{sec:preliminaries}, we recall the relevant notions of
R\'enyi entropy and Fisher information, together with de Bruijn's
identity. In Subsection~\ref{sec:variational}, we establish the
variational principle and its Hermite-polynomial consequences. In
Subsection~\ref{sec:weighted-sums}, we apply these estimates to weighted
sums, with applications to centred exponential random variables and to
coordinate marginals of the uniform distribution on the sphere. Finally,
Section~\ref{sec:min-entropy} is devoted to stability and equality in
min-entropy power inequalities.

\section{Entropy and Fisher information}
\label{sec:preliminaries}

We begin by recalling several definitions and concepts fundamental in information theory. Given $p \in [0, +\infty]$, for a random vector $X$ in $\R^d$ with density $f$ (with respect to Lebesgue measure), we define its $p$-R\'enyi entropy as
\[ 
h_p(X) = h_p(f) = \frac{1}{1-p}\log\left(\int_{\R^d} f^p\right)
 \]
(provided the integral exists) and the $p$-R\'enyi entropy power
\[
N_p(X) = \exp\left(\frac{2}{d}h_p(X)\right)
 \]
with the values at $p = 0, 1, +\infty$ understood as the expressions (formally arising by taking the limits)
\begin{align*}
h_0(X) &= \log|\text{supp}(f)|, \\
h(X) &= h_1(X) = -\int_{\R^d} f \log f, \\
h_\infty(X) &= -\log\|f\|_\infty,
\end{align*}
the max-, Shannon- and min-entropy, respectively. We also abbreviate $N \equiv N_1$. We refer for instance to \cite{CT,Renyi1961} for further background. Nota bene, $h_p(X) = -\log\|f(X)\|_{L_{p-1}(X)}$, hence $p \mapsto h_p$ is nonincreasing, explaining the terminology of min/max-entropy. 

For densities $f$ with $\sqrt{f} \in H^1(\R^d)$, the Hilbert space of functions with square-integrable weak (distributional) derivatives, we define the Fisher information as
\[ 
I(X) = 4\int_{\R^d} |\nabla \sqrt{f}|^2 = \int_{\R^d} \frac{|\nabla f|^2}{f}.
 \]
It is fundamental and classical that among all densities with $\int |x|^2f(x) = d$, the standard Gaussian density $\phi(x) = (2\pi)^{-d/2}\exp(-\frac12|x|^2)$, $x 
\in \R^d$ has the least information. Thus, as a natural measure of closeness to Gaussianity, we define the relative Fisher information,
\[ 
J(X) = 4\int_{\R^d} \left|\left(\nabla + \frac{x}{2}\right) \sqrt{f(x)}\right|^2  \dd x= I(f) - I(\phi) = I(X) - d,
 \]
under the normalisation $\E X = 0$, $\E |X|^2 = d$. Here and throughout, $|\cdot|$ denotes the standard Euclidean norm.

We set $I(X) = +\infty$ and $J(X) = +\infty$, whenever $X$ does not possess a density $f$ with $\sqrt{f} \in H^1$.

A vital link between entropy and information was discovered by de Bruijn (in the literature presented and used by Stam in his seminal proof of the EPI in \cite{Stam}). The identity states that for the heat-flow evolution of $X$,
\[ 
X_t = X + \sqrt{t}Z, \qquad t > 0,
 \]
where $Z$ is a standard Gaussian random vector (i.e. with density $\phi$), independent of $X$, as long as $X$ has finite variance, the entropy of $X_t$ is finite, differentiable on $(0, +\infty)$, and
\[ 
\frac{\dd }{\dd t}h(X_t) = \frac12I(X_t).
 \]
Equivalently, for the Ornstein-Uhlenbeck evolution,
\[ 
\tilde X_t = e^{-t}X + \sqrt{1-e^{-2t}}Z, \qquad t > 0,
 \]
we have
\[ 
\frac{\dd }{\dd t}h(\tilde X_t) = J(\tilde X_t),
 \]
provided that $\E|X|^2 = d$. We refer to classical textbooks, e.g. \cite{CT} for background.

\subsection{A variational principle and Hermite-Fisher bound}
\label{sec:variational}

A starting point for us is a standard variational estimate for the relative Fisher information. It is the Gaussian-relative analogue of the integration-by-parts and Cauchy-Schwarz argument used for ordinary Fisher information, for instance see \cite{BCG-Fisher}, Corollary 2.3 therein and its proof (derived and used therein in a different context of bounds involving the characteristic function).

\begin{theorem}\label{thm:Fisher-main}
Let $X$ be a random vector in $\R^d$ with density $f$ on $\R^d$ satisfying $\sqrt{f} \in H^1$. Suppose $\E X = 0$, $\E|X|^2 = d$. For every $C^1$ compactly supported vector field $\Psi\colon \R^d \to \R^d$ with $\E|\Psi(X)|^2  > 0$, we have
\begin{equation}\label{eq:Fisher-variational}
J(X) \geq \frac{\Big(\E[\scal{X}{\Psi(X)} - \mathrm{div}\Psi(X)]\Big)^2}{\E[|\Psi(X)|^2]}.
\end{equation}
\end{theorem}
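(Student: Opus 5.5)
The plan is the standard integration-by-parts plus Cauchy--Schwarz argument. I write $g = \sqrt{f}\in H^1(\R^d)$ and introduce the vector field
\[
V(x) = \Big(\nabla + \frac{x}{2}\Big) g(x) = \nabla g(x) + \frac{x}{2} g(x),
\]
so that $J(X) = 4\int_{\R^d}|V|^2$ by definition. The aim is to express the numerator $\E[\scal{X}{\Psi(X)} - \mathrm{div}\Psi(X)]$ as an inner product of $V$ against something whose $L^2$ norm is $\E|\Psi(X)|^2$.

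\textbf{Step 1 (integration by parts).} Since $\Psi$ is $C^1$ with compact support and $f = g^2$ with $g\in H^1$, the function $f$ lies in $W^{1,1}_{\mathrm{loc}}$ with weak gradient $\nabla f = 2g\nabla g$. This uses the product rule for $H^1$ functions, and $g\nabla g\in L^1_{\mathrm{loc}}$ by Cauchy--Schwarz. Pairing the weak derivative with the test field $\Psi$ gives
\[
-\E[\mathrm{div}\Psi(X)] = -\int f\,\mathrm{div}\Psi = \int \scal{\nabla f}{\Psi} = 2\int g\scal{\nabla g}{\Psi}.
\]
Adding $\E\scal{X}{\Psi(X)} = \int g^2\scal{x}{\Psi}$, I obtain
\[
\E[\scal{X}{\Psi(X)} - \mathrm{div}\Psi(X)] = 2\int g\,\Big\langle \nabla g + \frac{x}{2}g,\ \Psi\Big\rangle = 2\int \scal{V}{g\Psi}.
\]
All integrals are finite because $\Psi$ is compactly supported and bounded, while $g,\nabla g\in L^2$.

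\textbf{Step 2 (Cauchy--Schwarz).} Applying Cauchy--Schwarz in $L^2(\R^d;\R^d)$ gives
\[
\Big(\E[\scal{X}{\Psi(X)} - \mathrm{div}\Psi(X)]\Big)^2 \le 4\int|V|^2\int g^2|\Psi|^2 = J(X)\,\E|\Psi(X)|^2 .
\]
Dividing by $\E|\Psi(X)|^2>0$ yields \eqref{eq:Fisher-variational}.

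The only step needing care is the justification of the identity $\nabla(g^2) = 2g\nabla g$ in the weak sense for $g\in H^1$, so that the integration by parts against $\Psi$ is legitimate. I would handle it by approximating $g$ in $H^1$ by $C_c^\infty$ functions $g_n$. For these, the identity and the integration by parts are classical. Passing to the limit on the compact support of $\Psi$ is justified by the bounds $\|g_n^2-g^2\|_{L^1}\le\|g_n-g\|_2\|g_n+g\|_2$ and $\|g_n\nabla g_n - g\nabla g\|_{L^1}\to0$. The normalisation $\E X=0$, $\E|X|^2=d$ plays no role in the inequality itself; it is only needed so that $J = I-d$ matches the definition.
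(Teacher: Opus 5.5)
Your proof is correct and follows essentially the paper's argument: apply Cauchy--Schwarz to $(2\nabla + x)\sqrt f$ paired with $\Psi\sqrt f$, then integrate the $\scal{\nabla f}{\Psi}$ term by parts. Your extra justification of the weak product rule $\nabla(g^2)=2g\nabla g$, by $C_c^\infty$ approximation in $H^1$, fills in a detail the paper leaves implicit.
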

\begin{proof}
Using the Cauchy-Schwarz inequality, we get
\begin{align*}
\sqrt{J(X)}\cdot \sqrt{\E[|\Psi(X)|^2]} &= \sqrt{\int_{\R^d} \left|\left(2\nabla + x\right) \sqrt{f(x)}\right|^2  \dd x}\cdot\sqrt{\int_{\R^d}|\Psi(x)\sqrt{f(x)}|^2 \dd x} \\
&\geq \left|\int_{\R^d} \scal{\left(2\nabla + x\right) \sqrt{f(x)}}{\Psi(x)\sqrt{f(x)}} \dd x \right|\\
&= \left|\int_{\R^d} \left( \scal{\nabla f(x)}{\Psi(x)} + \scal{x}{\Psi(x)}f(x)\right) \dd x\right|.
\end{align*}
Since $\Psi$ is compactly supported, integration by parts gives  $-\E[\mathrm{div}\Psi(X)]$ for the first term. The second term is $\E[\scal{X}{\Psi(X)}]$. Squaring completes the proof.
\end{proof}

When specialised to Hermite polynomials, this gives a strong bound, particularly handy for sums of independent random variables. 
We illustrate the concept in dimension one. Recall that the (probabilists') Hermite polynomials $(H_k)_{k=0}^\infty$ are defined by
\[ 
\frac{\dd^k}{\dd x^k}e^{-x^2/2} = (-1)^kH_k(x)e^{-x^2/2}.
 \]
That is,
\begin{align*}
H_0 \equiv 1, \quad H_1(x) = x, &\quad H_2(x) = x^2-1, \\
H_3(x) = x^3-3x, &\quad H_4(x) = x^4-6x^2+3, \quad\dots
 \end{align*}
We note the standard but very useful recurrence relation
\[ 
xH_k(x) - H_{k}'(x) = H_{k+1}(x), \qquad k \geq 0.
 \]
 
%Applying \eqref{eq:Fisher-variational} to $\Psi = H_k$ (approximated by a truncation with a smooth compactly supported cutoff function) thus yields the following bound.
For $a,b\in\mathbb{R}$, apply \eqref{eq:Fisher-variational}  to
\[
    \Psi_{a,b}(x)=H_k(x)-a-bx,
\]
using a smooth compactly supported truncation. Since
$\mathbb{E}X=0$ and $\mathbb{E}X^2=1$, the Hermite recurrence gives
\[
\begin{aligned}
\mathbb{E}\bigl[X\Psi_{a,b}(X)-\Psi_{a,b}'(X)\bigr]
&=\mathbb{E}H_{k+1}(X)
  -a\mathbb{E}X-b\mathbb{E}(X^2-1) \\
&=\mathbb{E}H_{k+1}(X).
\end{aligned}
\]
Thus the numerator in \eqref{eq:Fisher-variational} is independent of the parameters $a$ and $b$, and we may choose
them to minimize the denominator.

\begin{corollary}\label{cor:Hk}
Let $X$ be a mean $0$, variance $1$ random variable. Let $k\geq 2$
be an integer and suppose that $\E|X|^{2k}<\infty$. Define
\[
\widetilde H_k(x)
=
H_k(x)-\E H_k(X)-\E[XH_k(X)]x.
\]
If $\E[\widetilde H_k(X)^2]>0$, then
\begin{equation}\label{eq:Hermite-Fisher}
J(X)
\geq
\frac{\bigl(\E H_{k+1}(X)\bigr)^2}
     {\E[\widetilde H_k(X)^2]}
=
\frac{\bigl(\E H_{k+1}(X)\bigr)^2}
{
 \E[H_k(X)^2]
 -\bigl(\E H_k(X)\bigr)^2
 -\bigl(\E[XH_k(X)]\bigr)^2
}.
\end{equation}
Otherwise, $X$ is supported on the finite zero set of the nonzero
polynomial $\widetilde H_k$, hence $J(X)=+\infty$.

In particular, whenever the indicated denominators are nonzero, the
choices $k=2$ and $k=3$ give
\begin{equation}\label{eq:Hermite-Fisher-H2}
J(X)
\geq
\frac{\bigl(\E X^3\bigr)^2}
     {\E[(X^2-1-\kappa_3(X)X)^2]}
=
\frac{\kappa_3(X)^2}
     {2+\kappa_4(X)-\kappa_3(X)^2},
\end{equation}
and
\begin{equation}\label{eq:Hermite-Fisher-H3}
\begin{aligned}
J(X)
&\geq
\frac{\bigl(\E X^4-3\bigr)^2}
     {\E[(X^3-3X-\kappa_3(X)-\kappa_4(X)X)^2]}
\\
&=
\frac{\kappa_4(X)^2}
     {6+9\kappa_3(X)^2+9\kappa_4(X)+\kappa_6(X)
      -\kappa_4(X)^2}.
\end{aligned}
\end{equation}
\end{corollary}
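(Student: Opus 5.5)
The plan is to apply Theorem~\ref{thm:Fisher-main} in dimension one to $\Psi_{a,b}=H_k-a-bx$, remove the compact-support assumption by truncation, and then optimise over $a,b$. The computation preceding the corollary shows that, formally, the numerator equals $(\E H_{k+1}(X))^2$ for every $a,b$. The quantity $\E(H_k(X)-a-bX)^2$ is minimised by the $L^2(X)$-orthogonal projection onto $\mathrm{span}\{1,x\}$. Since $1$ and $X$ are orthonormal in $L^2(X)$ (mean $0$, variance $1$), the minimisers are $a=\E H_k(X)$ and $b=\E[XH_k(X)]$, giving $\widetilde H_k$. By Pythagoras, the minimum equals $\E H_k^2-(\E H_k)^2-(\E XH_k)^2$, which is the stated identity.

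\emph{Truncation.} This is the step I expect to need the most care. Take $\chi_R(x)=\chi(x/R)$ with $\chi\in C_c^\infty$, $0\le\chi\le1$, and $\chi=1$ on $[-1,1]$. Apply \eqref{eq:Fisher-variational} to $\Psi=\chi_R\widetilde H_k$. Then $X\Psi-\Psi'=\chi_R(X\widetilde H_k-\widetilde H_k')-\chi_R'\widetilde H_k$. The first term is dominated by a polynomial of degree $k+1\le 2k$, and $|\chi_R'|\le C/R$, so dominated convergence (using $\E|X|^{2k}<\infty$) gives the numerator's convergence to $\E[X\widetilde H_k-\widetilde H_k']=\E H_{k+1}(X)$ as $R\to\infty$. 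The denominator converges to $\E\widetilde H_k(X)^2$, which is finite since $\deg\widetilde H_k=k$. This requires $\E\widetilde H_k(X)^2>0$, which also ensures the truncated denominator is positive for large $R$. If $\sqrt f\notin H^1$, then $J=\infty$ and there is nothing to prove.

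\emph{Degenerate case.} Since $\widetilde H_k$ is monic of degree $k\ge2$, it is nonzero. If $\E\widetilde H_k(X)^2=0$, then $X$ lies almost surely in its at most $k$ roots. Such a law has no density, so $J(X)=+\infty$ by convention.

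\emph{Cumulant forms.} Both formulas come from expressing moments through cumulants with $\kappa_1=0$ and $\kappa_2=1$: $\E X^3=\kappa_3$, $\E X^4=3+\kappa_4$, $\E X^5=\kappa_5+10\kappa_3$, and $\E X^6=\kappa_6+15\kappa_4+10\kappa_3^2+15$.
\begin{itemize}
\item For $k=2$: $\E H_3=\kappa_3$, $\E H_2=0$, $\E XH_2=\kappa_3$, and $\E H_2^2=\E X^4-1=2+\kappa_4$. The denominator is therefore $2+\kappa_4-\kappa_3^2$, which gives \eqref{eq:Hermite-Fisher-H2}.
\item For $k=3$: $\E H_4=\kappa_4$, $\E H_3=\kappa_3$, and $\E XH_3=\E X^4-3=\kappa_4$. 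Moreover $\E H_3^2=\E X^6-6\E X^4+9=\kappa_6+9\kappa_4+10\kappa_3^2+6$. Subtracting $\kappa_3^2+\kappa_4^2$ gives \eqref{eq:Hermite-Fisher-H3}.
\end{itemize}
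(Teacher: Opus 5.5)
Your proposal is correct and follows the paper's route. You apply Theorem~\ref{thm:Fisher-main} to a truncation of $H_k-a-bx$, use the Hermite recurrence to see that the numerator is $\E H_{k+1}(X)$ regardless of $a,b$, minimise the denominator by orthogonal projection onto $\mathrm{span}\{1,x\}$ in $L_2(\mu_X)$, and then convert moments to cumulants; your dominated-convergence handling of the cutoff $\chi_R$ and your treatment of the degenerate case just spell out what the paper only sketches.
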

\begin{proof}
Since $X$ has mean $0$ and variance $1$, the functions $1$ and $x\mapsto x$ are orthonormal in $L_2(\mu_X)$, $\mu_X$ denoting the law of $X$.
Consequently, the orthogonal projection of $H_k(x)$ onto $\operatorname{span}\{1,x\}$ is $\E H_k(X) + \E[X H_k(X)]x$, thus
\[
\inf_{a,b\in\R}\E[(H_k(X)-a-bX)^2]
=
\E[\widetilde H_k(X)^2],
\]
and the minimum is attained at
\[
a=\E H_k(X),
\qquad
b=\E[XH_k(X)].
\]
As observed above, subtracting $a+bX$ does not change the
numerator in \eqref{eq:Fisher-variational}, inequality
\eqref{eq:Hermite-Fisher} follows.
\end{proof}

Here and throughout, $\kappa_r(X)$ denotes the $r$th cumulant of $X$,
$r\geq 0$. When the moment generating function $\E e^{tX}$ is
well-defined in a neighbourhood of $0$, the cumulants are the
coefficients in the Taylor expansion of the log-moment generating
function,
\[
\Lambda_X(t)
=
\log(\E e^{tX})
=
\sum_{r=0}^\infty \kappa_r(X)\frac{t^r}{r!}.
\]
In general, under only a finite moment assumption
$\E|X|^\ell<\infty$, they can be formally defined through the
characteristic function,
\[
\kappa_r(X)
=
i^{-r}\frac{\dd^r}{\dd t^r}\Big|_{t=0}\log\E e^{itX},
\qquad
1\leq r\leq\ell.
\]
We recall that using the standard relations between the central
moments and cumulants (obtained for instance from the formal identity
$\E[Xe^{tX}]=\Lambda_X'(t)\E e^{tX}$), for the mean $0$, variance $1$
random variable $X$, we have
\begin{align*}
\E X^2 &= \kappa_2(X)=1,
&\qquad \E X^3 &= \kappa_3(X),\\
\E X^4 &= 3+\kappa_4(X),
&\qquad \E X^6
&= \kappa_6(X)+15\kappa_4(X)
   +10\kappa_3(X)^2+15.
\end{align*}
For $k=2$ and $k=3$, the coefficients of the corresponding affine
projections satisfy
\[
\begin{aligned}
\E H_2(X)&=0,
&\qquad \E[XH_2(X)]&=\kappa_3(X),
\\
\E H_3(X)&=\kappa_3(X),
&\qquad \E[XH_3(X)]&=\kappa_4(X).
\end{aligned}
\]
Together with the preceding moment identities, these relations yield
\eqref{eq:Hermite-Fisher-H2} and
\eqref{eq:Hermite-Fisher-H3}.

Polynomial projection methods for Fisher information go back at least to
Jarrett's work \cite{Jarr} on Fisher information with known moments (see Section 2 and Theorem 3.1 therein). In the CLT
setting, Hermite polynomials enter naturally through the Edgeworth
expansions of Bobkov--Chistyakov--Götze from \cite{BCG-Fisher}.  The present argument combines these two ideas in a particularly simple
way: we project the relative (to Gaussian) score function
$\frac{\nabla f(X)}{f(X)}+X$ onto a single orthogonalized Hermite
direction $\widetilde H_k$.

\subsection{Bounds for weighted sums}
\label{sec:weighted-sums}

Since the cumulants are additive under sums of independent random variables, Corollary \ref{cor:Hk} is especially well-suited for such sums. Our main result below follows at once from \eqref{eq:Hermite-Fisher-H3}.

\begin{theorem}\label{thm:sums}
Let $X_1, X_2, \dots,X_n$ be i.i.d. mean $0$, variance $1$ random variables with $\mathbb E|X_1|^6<\infty$. Let $a_1, a_2, \dots,a_n$ be real numbers such that $\sum_{j=1}^n a_j^2 = 1$. We have,
\begin{equation}\label{eq:J-sums}
J\left(\sum_{j=1}^n a_jX_j\right)
\geq
\frac{\kappa_4^2s_4^2}
{
6+9\kappa_4s_4+9\kappa_3^2s_3^2+\kappa_6s_6
-\kappa_4^2s_4^2
}.
\end{equation}

where $\kappa_r = \kappa_r(X_1)$ are the cumulants of $X_1$ and
$s_r = \sum _{j=1}^{n}a_j^r.$
In particular, 
\begin{align}\label{eq:J-asympt}
J\left(\sum_{j=1}^n a_jX_j\right)
&\geq
\frac{\kappa_4^2}{6n^2}
\left(
1+\frac{9|\kappa_4|+9\kappa_3^2}{6n}
+\frac{|\kappa_6|}{6n^{3/2}}
\right)^{-1}
\\
&=
\frac{\kappa_4^2}{6n^2}+O_{X_1}(n^{-3}).
\notag
\end{align}
\end{theorem}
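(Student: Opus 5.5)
Apply \eqref{eq:Hermite-Fisher-H3} of Corollary~\ref{cor:Hk} to $S=\sum_{j=1}^n a_jX_j$, and then compute the cumulants of $S$ using additivity and homogeneity. First we check the hypotheses. Since $\sum a_j^2=1$ and the $X_j$ are i.i.d., $S$ has mean $0$ and variance $1$, and Minkowski's inequality gives $\E|S|^6<\infty$, which is what $k=3$ requires. For independent summands the cumulants add, and $\kappa_r(aX)=a^r\kappa_r(X)$. Hence
\[
\kappa_r(S)=\sum_{j=1}^n a_j^r\kappa_r = \kappa_r s_r, \qquad r\le 6.
\]
Substituting $\kappa_3(S)=\kappa_3s_3$, $\kappa_4(S)=\kappa_4s_4$ and $\kappa_6(S)=\kappa_6s_6$ into \eqref{eq:Hermite-Fisher-H3} gives \eqref{eq:J-sums} directly.

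We must also handle the degenerate case where the denominator vanishes. The denominator equals $\E[\widetilde H_3(S)^2]\ge 0$. If it is zero, Corollary~\ref{cor:Hk} gives $J(S)=+\infty$, so the inequality holds trivially. (In fact, when some $X_j$ has a density and $a_j\neq0$, $S$ is absolutely continuous and this case cannot occur.) If instead $\kappa_4s_4=0$, the right-hand side is $0$ and there is nothing to prove.

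For \eqref{eq:J-asympt}, the plan is to bound the denominator from above and the numerator from below. This step is the main obstacle, because the stated bound has the form $\frac{\kappa_4^2}{6n^2}$, which needs the numerator $\kappa_4^2 s_4^2$ to be at least $\kappa_4^2/n^2$. By Cauchy--Schwarz,
\[
1=\Big(\sum a_j^2\Big)^2\le n\sum a_j^4 = n s_4,
\]
so $s_4\ge 1/n$ and $s_4^2\ge 1/n^2$. For the denominator, the term $-\kappa_4^2s_4^2\le0$ can be dropped. We then bound $|s_4|\le \max_j a_j^2\le 1$ and $|s_6|\le s_4$, and $s_3^2\le s_2s_4=s_4$ by Cauchy--Schwarz.

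Writing $f(x)=\kappa_4^2x/(6+Bx)$ with $x=s_4^2$ and $B$ absorbing the remaining terms in suitable form, the bound is monotone increasing in $s_4^2$ provided the lower-order terms are controlled by $s_4$. The cleanest route is to compare the ratio
\[
\frac{\kappa_4^2s_4^2}{6+9|\kappa_4|s_4+9\kappa_3^2s_4+|\kappa_6|s_6}
\]
with its value at the extremal configuration $a_j=1/\sqrt n$, where $s_4=1/n$, $s_3=n^{-1/2}$ and $s_6=n^{-2}$. Dividing the numerator and denominator by $s_4^2$ gives
\[
\frac{\kappa_4^2}{6s_4^{-2}+9(|\kappa_4|+\kappa_3^2)s_4^{-1}+|\kappa_6|s_6s_4^{-2}}.
\]
The first two terms decrease as $s_4$ grows, so they are maximised at $s_4=1/n$, where they become $6n^2+9(|\kappa_4|+\kappa_3^2)n$. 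For the last term, $s_6\le s_4^{3/2}$, which follows from $\|a\|_6\le\|a\|_4$, gives $s_6s_4^{-2}\le s_4^{-1/2}\le n^{1/2}$. Thus the denominator is at most
\[
6n^2+9(|\kappa_4|+\kappa_3^2)n+|\kappa_6|n^{1/2}.
\]
Factoring out $6n^2$ yields exactly \eqref{eq:J-asympt}, and expanding $(1+O(1/n))^{-1}$ gives the $O_{X_1}(n^{-3})$ remainder.
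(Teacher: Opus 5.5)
Your proposal is correct and follows the paper's proof. You apply \eqref{eq:Hermite-Fisher-H3} to the sum using $\kappa_r(S)=\kappa_r s_r$, then drop $-\kappa_4^2s_4^2$ and use $s_3^2\le s_4$, $s_6\le s_4^{3/2}$, $s_4\ge 1/n$; dividing by $s_4^2$ and noting that $s_4^{-2}$, $s_4^{-1}$, $s_4^{-1/2}$ are each decreasing is a direct form of the paper's observation that $u\mapsto u^2/(c+\sum_j b_ju^{\alpha_j})$ is increasing for $0<\alpha_j<2$.
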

\begin{proof}
Let $X=\sum_j a_jX_j$. Plainly,
$\kappa_r(X)=\kappa_rs_r$, hence
\eqref{eq:Hermite-Fisher-H3} becomes \eqref{eq:J-sums}.
To obtain \eqref{eq:J-asympt}, we discard the nonpositive term
$-\kappa_4^2s_4^2$ in the denominator and use
\[
s_3^2\leq s_4s_2=s_4,
\qquad
s_6\leq s_4^{3/2}.
\]
This gives
\[
J(X)
\geq
\frac{\kappa_4^2s_4^2}
{
6+(9|\kappa_4|+9\kappa_3^2)s_4
+|\kappa_6|s_4^{3/2}
}.
\]
Another application of the Cauchy-Schwarz inequality gives $1 = s_2^2 \leq ns_4$, so $s_4 \geq \frac1n$. It remains to note that whenever $c > 0$, $b_1, b_2, \dots \geq 0$ and $0 < \alpha_1, \alpha_2, \dots < 2$, the function
\[ 
(0,\infty) \ni u \mapsto \frac{u^2}{c+\sum b_ju^{\alpha_j}}
 \]
is strictly increasing. Indeed,
\[ 
\frac{\dd }{\dd u}\frac{u^2}{c + \sum b_ju^{\alpha_j}} = \frac{2cu + \sum (2-\alpha_j)b_ju^{\alpha_j+1}}{(c+\sum b_ju^{\alpha_j})^2} > 0.
 \]
When all weights $a_j$ equal and $\kappa_3=0$, the main result from \cite{BCG-Fisher} (Corollary 1.2) asserts in particular that $J(X) = \frac{\kappa_4^2}{6}n^{-2} + o(n^{-2})$, hence the claimed sharpness.
\end{proof}

\begin{remark}\label{rem:sharpness}
The order $1/n^2$ of the leading term as well as the constant $\kappa_4^2/6$ in \eqref{eq:J-asympt} are in general best possible. Specifically, if the hypotheses of the Fisher-information Edgeworth expansion
of Bobkov, Chistyakov and Götze are satisfied (Corollary 1.2 in \cite{BCG-Fisher}) -- finite $6$th moment and eventual finite Fisher information along the CLT sums--, then for equal weights and $\kappa_3 = 0$,
\[
J\left(\frac1{\sqrt n}\sum_{j=1}^nX_j\right) =
\frac{\kappa_4^2}{6n^2}+o(n^{-2}).
\]
With sufficiently many moments, one can replace the $o(n^{-2})$ term with explicit remainders of higher order.
Consequently, when $\kappa_4 \ne 0$, both the order $n^{-2}$ and the
leading constant $\kappa_4^2/6$ are optimal. 
\end{remark}

\begin{remark}\label{rem:kappa3=0}
When $\kappa_3 \neq 0$, bound \eqref{eq:J-asympt} may not be tight, depending on the behaviour of $s_3$. An improvement can be obtained by repeating verbatim the approach with the bound \eqref{eq:Hermite-Fisher-H2} in lieu of \eqref{eq:Hermite-Fisher-H3}, leading to
\begin{align*}
J\left(\sum a_jX_j\right) &\geq
\frac{\kappa_3^2s_3^2}
{2+\kappa_4s_4-\kappa_3^2s_3^2}.
\end{align*}
When the $a_j$ are all equal, this leads to
\[
J\left(\sum a_jX_j \right) \geq
\frac{\kappa_3^2}{2n}
\big(1+(\kappa_4-\kappa_3^2)/(2n)\big)^{-1}
=
\frac{\kappa_3^2}{2n}+O(n^{-2})
\]
which has the sharp leading term $\kappa_3^2/(2n)$, under the hypotheses
of the corresponding Edgeworth expansion. For arbitrary signed weights,
however, cancellation may make $s_3$ arbitrarily small.
\end{remark}

The main point, however, is that the relevant cumulants are those of the sum $X= \sum a_jX_j$ rather than of a single summand $X_1$, viz. the above bound may be vacuous when $s_3 = 0$, even though $\kappa_3 \neq 0$. In other words, cancellation of the third cumulant removes the usual $1/n$ Fisher-information contribution, and the fourth cumulant typically produces the leading $1/n^2$-term, as provided by the $H_3$-bound \eqref{eq:Hermite-Fisher-H3}.

This phenomenon is illustrated by the example of centred exponential random variables, which we present in the next corollary. Nota bene, this example served as the main motivation for this work: the weighted sums of exponential random variables are naturally connected to the volume of sections of the regular simplex (see \cite{AG, BP1, BP2, BTT, MRTT, MTTT, Tang, Webb}).

\begin{corollary}\label{cor:exps}
Let $\cE_1, \cE_2, \dots, \cE_n$ be i.i.d. standard exponential random variables, i.e. with density $e^{-x}\1_{(0,+\infty)}$ on $\R$. Let $X_j = \cE_j - 1$ be mean $0$, variance $1$ shifted exponential random variables and $a_1, a_2, \dots,a_n$ be real numbers such that $\sum_{j=1}^{n} a_j^2 = 1$. For
\[ 
X = \sum_{j=1}^{n} a_jX_j,
 \]
one has
\begin{align*}
J\left(X\right)
&\geq
\frac{36s_4^2}
{6+54s_4+36s_3^2+120s_6-36s_4^2}
\\
&\geq
\frac{6}{n^2}
\left(
1+\frac{15}{n}+\frac{20}{n^{3/2}}-\frac{6}{n^2}
\right)^{-1}
=
\frac{6}{n^2}-\frac{90}{n^3}+O(n^{-7/2}).
\end{align*}
\end{corollary}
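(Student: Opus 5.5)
The plan is to specialise Theorem~\ref{thm:sums}, more precisely its exact form \eqref{eq:J-sums}, to $X_1=\cE_1-1$. Its hypotheses hold: $X_1$ has mean $0$, variance $1$ and all moments finite. The log-moment generating function of $\cE_1$ is $-\log(1-t)=\sum_{r\ge1}\frac{t^r}{r}=\sum_{r\geq 1}(r-1)!\frac{t^r}{r!}$, so $\kappa_r(\cE_1)=(r-1)!$. Centring changes only the first cumulant, hence
\[
\kappa_3=2,\qquad \kappa_4=6,\qquad \kappa_6=120.
\]
Substituting into \eqref{eq:J-sums} gives the first inequality: the numerator is $\kappa_4^2s_4^2=36s_4^2$, and the denominator is $6+54s_4+36s_3^2+120s_6-36s_4^2$. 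The degenerate case of Corollary~\ref{cor:Hk}, a vanishing denominator, cannot occur, because $X$ has a density and so is not supported on a finite set.

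For the second inequality I would keep the negative term $-36s_4^2$ instead of discarding it as in the proof of Theorem~\ref{thm:sums}. First, I use $s_3^2\le s_4$ and $s_6\le s_4^{3/2}$ to bound the denominator above by $D(u)=6+90u+120u^{3/2}-36u^2$ with $u=s_4$. Both the original denominator and $D(s_4)$ are positive, so this bound only decreases the fraction. This yields
\[
J(X)\ge g(s_4), \qquad g(u)=\frac{36u^2}{D(u)}.
\]
Next, I show that $g$ is increasing on $(0,1]$, which is the range of $s_4$ since $s_4\le s_2^2=1$. The derivative computation in the proof of Theorem~\ref{thm:sums} applies verbatim, with the additional term of exponent $\alpha=2$. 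That term contributes a factor $(2-\alpha)=0$ to the numerator of the derivative, so the derivative's numerator is $2\cdot 6u+90u^2+\tfrac12\cdot 120u^{5/2}>0$. The only extra point is positivity of $D$ on $(0,1]$, which follows from $90u-36u^2\ge 54u>0$ there.

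Finally, since $s_4\ge 1/n$ by Cauchy--Schwarz, monotonicity gives $J(X)\ge g(1/n)$. Factoring out $6/n^2$ produces exactly
\[
\frac{6}{n^2}\Bigl(1+\frac{15}{n}+\frac{20}{n^{3/2}}-\frac{6}{n^2}\Bigr)^{-1},
\]
and expanding $(1+\varepsilon)^{-1}=1-\varepsilon+O(\varepsilon^2)$ gives $6/n^2-90/n^3+O(n^{-7/2})$. The only step needing care is the monotonicity argument in the presence of the negative $u^2$ term. It is handled by the observation that this term drops out of the derivative's numerator, together with the positivity of $D$ on $(0,1]$; everything else is bookkeeping.
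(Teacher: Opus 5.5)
Your proposal is correct and takes essentially the same route as the paper, which leaves this corollary's proof implicit. You read off $\kappa_3=2$, $\kappa_4=6$, $\kappa_6=120$ and substitute into \eqref{eq:J-sums}. Then, as in the proof of Corollary~\ref{cor:unifs}, you keep the negative term $-36s_4^2$, bound $s_3^2\le s_4$ and $s_6\le s_4^{3/2}$, verify monotonicity in $s_4$ on $(0,1]$, and conclude with $s_4\ge 1/n$. Your observation that the $u^2$ term drops out of the derivative's numerator, together with positivity of $D$ on $(0,1]$, is exactly what secures the $-6/n^2$ correction that discarding the term, as in Theorem~\ref{thm:sums}, would lose.
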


The leading term $\frac{6}{n^2}$ is best possible for a bound uniform over all coefficient vectors $a$ in $\R^n$. To see that, consider the \emph{balanced} coefficient vector $a = (\frac{1}{\sqrt{n}}, \dots, \frac{1}{\sqrt{n}}, -\frac{1}{\sqrt{n}}, \dots, -\frac{1}{\sqrt{n}})$ for even $n=2m$, with half of the coefficients equal to $\frac{1}{\sqrt{n}}$ and half equal to $-\frac{1}{\sqrt{n}}$.  Note that $s_3 = 0$. Then, regrouping,
\[ 
X = \sum_{j=1}^m \frac{1}{\sqrt{m}}Y_j, \qquad Y_j = \frac{X_j-X_{j+m}}{\sqrt{2}} =\frac{\cE_j-\cE_{j+m}}{\sqrt{2}}.
 \]
The random variables $Y_j$ are i.i.d. symmetric exponential, with mean $0$, variance~$1$. Now, by virtue of the aforementioned Bobkov, Chistyakov and G\"otze's asymptotic expansion from \cite{BCG-Fisher}, $J(X) = \frac{\kappa_4(Y_1)^2}{6m^2} + O(n^{-3}) = \frac{4\kappa_4(Y_1)^2}{6n^2} + O(n^{-3})$, and here $\frac{4\kappa_4(Y_1)^2}{6} = \frac{4(2\kappa_4(X_1/\sqrt2))^2}{6} = \frac{\kappa_4(X_1)^2}{6} = 6$.

We finish this section with one more example of coordinate marginals of the uniform distribution on the sphere, interpolating between the uniform ($d=3$) and Gaussian distribution ($d\to \infty$). These random variables have received considerable attention in probabilistic literature, for instance, see \cite{BC, CST, KK, LO-unif}. In particular, the next corollary complements recent results from \cite{CGT, RTW}.

\begin{corollary}\label{cor:unifs}
Let $d \geq 2$, and $\theta$ be a random vector uniform on the unit sphere $S^{d-1}$ in $\R^d$ and let $X_1, X_2, \dots,X_n$ be i.i.d. copies of its variance $1$ normalised coordinate marginal, say $\sqrt{d}\theta_1 = \scal{\sqrt{d}\theta}{e_1}$. Let $a_1, a_2, \dots, a_n$ be real numbers such that $\sum_{j=1}^{n} a_j^2 = 1$. For
\[ 
X = \sum_{j=1}^{n} a_jX_j,
 \]
one has
\begin{align*}
J\left(X \right)
&\geq
\frac{\frac{36}{(d+2)^2}s_4^2}
{
6-\frac{54}{d+2}s_4
+\frac{240}{(d+2)(d+4)}s_6
-\frac{36}{(d+2)^2}s_4^2
}
\\
&\geq
\frac{6}{(d+2)^2n^2}
\left(
1-\frac{54}{6(d+2)n}
+\frac{40}{(d+2)(d+4)n^{3/2}}
-\frac{6}{(d+2)^2n^2}
\right)^{-1}
\\
&=
\frac{6}{(d+2)^2n^2}
+\frac{54}{(d+2)^3n^3}
+O_d(n^{-7/2}).
\end{align*}
\end{corollary}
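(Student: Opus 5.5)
The plan is to apply Theorem~\ref{thm:sums} (the bound \eqref{eq:J-sums}) with the cumulants of $X_1=\sqrt d\,\theta_1$ computed explicitly, and then to reduce to $s_4=1/n$ by a monotonicity argument as in the proof of \eqref{eq:J-asympt}.

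\textbf{Step 1: cumulants of $X_1$.} The coordinate $\theta_1$ is symmetric, so $\kappa_3=0$. The standard spherical moments are
\[
\E\theta_1^2=\frac1d,\qquad \E\theta_1^4=\frac{3}{d(d+2)},\qquad \E\theta_1^6=\frac{15}{d(d+2)(d+4)}.
\]
These follow, for instance, from $\theta=G/|G|$ with $G$ standard Gaussian and $|G|$ independent of $\theta$. Consequently
\[
\E X_1^4=\frac{3d}{d+2},\qquad \E X_1^6=\frac{15d^2}{(d+2)(d+4)}.
\]
The moment–cumulant relations recalled after Corollary~\ref{cor:Hk} then give
\[
\kappa_4=\frac{3d}{d+2}-3=-\frac{6}{d+2},
\]
and
\[
\kappa_6=\E X_1^6-15\kappa_4-15=\frac{15d^2+90(d+4)-15(d+2)(d+4)}{(d+2)(d+4)}=\frac{240}{(d+2)(d+4)}.
\]
Substituting these into \eqref{eq:J-sums} yields exactly the first inequality. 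Its denominator equals $\E[\widetilde H_3(X)^2]$. This is positive because $X$ has a density, so it is not supported on a finite set.

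\textbf{Step 2: reduction to $s_4=1/n$.} Here we cannot simply discard terms as in Theorem~\ref{thm:sums}, since $\kappa_4<0$ and the $s_4$ term in the denominator is negative. Instead we use $s_6\le s_4^{3/2}$, which is legitimate because $\kappa_6>0$ and this only enlarges the denominator. We also keep the negative terms. This reduces the bound to $J(X)\ge g(s_4)$, where
\[
\frac{1}{g(u)}=\frac{6}{\gamma u^2}-\frac{\alpha}{\gamma u}+\frac{\beta}{\gamma u^{1/2}}-1,
\qquad \alpha=\frac{54}{d+2},\ \beta=\frac{240}{(d+2)(d+4)},\ \gamma=\frac{36}{(d+2)^2}.
\]
Since $1/n\le s_4\le 1$, it suffices to show that $1/g$ is decreasing on $(0,1]$. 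Up to the positive factor $1/(\gamma u^3)$, its derivative is
\[
-12+\alpha u-\tfrac{\beta}{2}u^{3/2}.
\]
The function $\alpha u-\frac\beta2 u^{3/2}$ has derivative $\alpha-\frac{3\beta}{4}u^{1/2}\ge \alpha-\frac{3\beta}{4}>0$ on $[0,1]$, so it is increasing there and is maximised at $u=1$. This requires $\alpha>\frac34\beta$, i.e.\ $54(d+4)>180$, which holds for every $d\ge2$. At $u=1$ the value is $\alpha-\frac\beta2$, which is at most $13.5-5=8.5<12$ for $d=2$ and is smaller for larger $d$. Hence the derivative is negative, and $J(X)\ge g(s_4)\ge g(1/n)$. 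Writing out $g(1/n)$ and factoring $\frac{6}{(d+2)^2n^2}$ gives the displayed second line.

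\textbf{Step 3: expansion.} The final asymptotic follows from $(1-x)^{-1}=1+x+O(x^2)$ with $x=\frac{9}{(d+2)n}+O_d(n^{-3/2})$. This produces $\frac{6}{(d+2)^2n^2}+\frac{54}{(d+2)^3n^3}+O_d(n^{-7/2})$.

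The main obstacle is Step 2. Because $\kappa_4<0$, the clean monotonicity lemma from the proof of Theorem~\ref{thm:sums} does not apply directly. The worst case is $d=2$, where $\alpha=13.5$ exceeds $12$, so the positive $\beta$ term is genuinely needed to keep the derivative negative on all of $[1/n,1]$.
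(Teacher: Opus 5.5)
Your proposal is correct and follows the paper's proof: you use the same cumulants $\kappa_3=0$, $\kappa_4=-6/(d+2)$, $\kappa_6=240/((d+2)(d+4))$ in the $H_3$-bound, then $s_6\le s_4^{3/2}$, then monotonicity in $s_4$ on $(0,1]$ together with $s_4\ge 1/n$, and you supply the derivative check that the paper only asserts. Strictly, deducing that $g$ is increasing from $1/g$ decreasing also needs the modified denominator to stay positive on $[1/n,s_4]$; this holds because at $u=s_4$ it dominates $\E[\widetilde H_3(X)^2]>0$ and $1/g$ is decreasing.
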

\begin{proof}
For the cumulants, we have $\kappa_3 = 0$ (by symmetry), $\kappa_4 = -\frac{6}{d+2}$ and $\kappa_6 = \frac{240}{(d+2)(d+4)}$. The first inequality is then a direct application of \eqref{eq:Hermite-Fisher-H3}. Because of the negative coefficient $\kappa_4$, \eqref{eq:J-asympt} is not readily applicable to conclude the second inequality, but we redo the analysis: we first apply $s_6 \leq s_4^{3/2}$ and then verify by a direct standard computation that the resulting expression is increasing as a function of $s_4$ in the feasible interval $s_4 \in (0,1]$, to conclude with the bound $s_4 \geq \frac1n$.
\end{proof}

\begin{remark}
The Hermite--Fisher bounds above can be integrated along the
Ornstein--Uhlenbeck flow to yield lower bounds for the Gaussian entropy
deficit. Let \(X\) be a centred random variable with variance one, finite
sixth moment and finite entropy, and let \(Z\) be a standard Gaussian
random variable independent of \(X\). Setting
\[
\widetilde X_t
=
e^{-t}X+\sqrt{1-e^{-2t}}\,Z,
\]
the integrated de Bruijn identity gives
\[
h(Z)-h(X)
=
\int_0^\infty J(\widetilde X_t)\,dt.
\]
Moreover, for every \(r\geq 3\),
\[
\kappa_r(\widetilde X_t)
=
e^{-rt}\kappa_r(X).
\]
Consequently, integrating \eqref{eq:Hermite-Fisher-H2} and
\eqref{eq:Hermite-Fisher-H3}, respectively, and discarding the new
nonpositive correction terms, yields
\[
h(Z)-h(X)
\geq
\frac{\kappa_3(X)^2}
     {6\bigl(2+|\kappa_4(X)|\bigr)}
\]
and
\[
h(Z)-h(X)
\geq
\frac{\kappa_4(X)^2}
     {8\bigl(
        6+9\kappa_3(X)^2
        +9|\kappa_4(X)|
        +|\kappa_6(X)|
       \bigr)}.
\]
In particular, if \(X_1,X_2,\ldots, X_n \) are i.i.d. centred random variables
with variance one, finite sixth moment and finite entropy and
\[
S_n=\frac{1}{\sqrt n}\sum_{j=1}^n X_j,
\]
then
\[
h(Z)-h(S_n)
\geq
\frac{\kappa_3(X_1)^2}
     {6n\bigl(2+|\kappa_4(X_1)|/n\bigr)}
=
\frac{\kappa_3(X_1)^2}{12n}
+O_{X_1}(n^{-2}).
\]
If \(\kappa_3(X_1)=0\), and $\kappa_4(X_1) \neq 0$, the second estimate instead gives
\[
h(Z)-h(S_n)
\geq
\frac{\kappa_4(X_1)^2}
     {8n^2\bigl(
        6+9|\kappa_4(X_1)|/n
        +|\kappa_6(X_1)|/n^2
       \bigr)}
=
\frac{\kappa_4(X_1)^2}{48n^2}
+O_{X_1}(n^{-3}).
\]
Thus, the vanishing of the third cumulant is also reflected at the
entropy level: the fourth cumulant then yields an entropy deficit of order
\(n^{-2}\).
\end{remark}

\section{Min-entropy power inequalities}
\label{sec:min-entropy}

The cornerstone result put forward by Shannon, the entropy-power inequality (EPI) asserts that
\[ 
N\left(\sum_{j=1}^n X_j\right) \geq \sum_{j=1}^n N(X_j),
 \]
for all independent random vectors $X_1, \dots, X_n$ in $\R^d$. Its foundational importance for the entire digital age simply cannot be overstated, having transformed how we quantify information.  In words, the entropy power of a sum of independent random vectors is at least the sum of their individual entropy powers and equality holds for Gaussian random vectors with proportional covariance
matrices. The EPI can thus be viewed as an information-theoretic
counterpart of the Brunn--Minkowski inequality. This naturally raises
the question of whether analogous inequalities hold for other R\'enyi
entropy powers, and in particular for the min-entropy power
\(N_\infty\).

Bobkov and Chistyakov in \cite{BCh} established an extension to  the family of $p$-entropy-powers for $p>1$,
\[ 
N_p\left(\sum_{j=1}^n X_j\right) \geq \frac{1}{e}p^{\frac{1}{p-1}}\sum_{j=1}^n N_p(X_j),
 \]
which in the limit $p \to \infty$ yields the following min-EPI,
\begin{equation}\label{eq:min-EPI}
N_\infty\left(\sum_{j=1}^n X_j\right) \geq \frac{1}{e}\sum_{j=1}^n N_\infty(X_j).
\end{equation}
This in fact remains valid for random vectors without densities. In general, we introduce the maximum functional
\[ 
M(X) = \inf\left\{ \alpha \geq 0, \ \p{X \in A} \leq \alpha|A| \ \text{for all Borel sets $A$ in $\R^d$} \right\}
 \]
and put
\[ 
N_\infty(X) = M(X)^{-2/d}
 \]
with $N_\infty(X) = 0$ when $M(X) = +\infty$. In particular, when $X$ has a bounded density $f$, we have $M(X) = \|f\|_\infty$ (see \cite{MMX}).

In the scalar case $d=1$, in work \cite{BCh-max} exploiting rather surprising connections to Ball's cube slicing combined with Rogozin's symmetrisation principle of \cite{Rog}, they improved the constant $\frac{1}{e}$ to $\frac12$, which is best possible, attained when $n=2$ and $X_1, X_2$ are uniform on measurable sets (of finite measure), symmetric with respect to a point. Building on this approach, Melbourne and Roberto in \cite{MR}, having developed \emph{local} stability of Ball's cube slicing, applied it and subsequently determined all equality cases in the sharp scalar min-entropy power inequality \eqref{eq:min-EPI} with constant 1/2 (which are essentially, modulo natural symmetries, the said uniform distributions). 

For $d > 1$, Madiman, Melbourne and Xu in \cite{MMX} extended Rogozin's principle to the vector-valued setting and, by exploiting the slicing inequalities of Brzezinski from \cite{Brz},  obtained the following sharp min-EPI,
\begin{equation}\label{eq:min-EPI-Rd}
N_\infty\left(\sum_{j=1}^n X_j\right) \geq c_d\sum_{j=1}^n N_\infty(X_j),
\end{equation}
with
\[ 
c_d = \frac{\Gamma(1+d/2)^{2/d}}{d/2+1}.
 \]
The equality is attained asymptotically as $n \to \infty$, with $X_1, \dots, X_n$ i.i.d. uniform on an Euclidean ball. Curiously, in addition to this asymptotic ``extremiser'', dimension $d=2$ still admits a nonasymptotic ``extremiser'', similar to the scalar case. Our results provide global stability in all dimensions, with the equality conditions as a straightforward consequence.

We extend the definition of the constant $c_d$ by putting $c_1 = \frac12$, so that \eqref{eq:min-EPI-Rd} holds in all dimensions $d \geq 1$.

\begin{theorem}\label{thm:min-EPIs}
Let $d \geq 1$ and $n \geq 2$. Let $X_1, X_2, \dots, X_n$ be independent random vectors in $\R^d$, suppose $\sum_{j=1}^{n} N_\infty(X_j) > 0$, and set
\[ 
a_k = \sqrt{\frac{N_\infty(X_k)}{\sum_{j=1}^n N_\infty(X_j)}}, \quad k = 1, \dots, n.
 \]
Assuming $a_1 \geq a_2 \geq \dots \geq a_n \geq 0$, we have
\begin{equation}\label{eq:min-EPI-stab}
N_\infty\left(\sum_{j=1}^n X_j\right) \geq c_d\Big(1+\delta_d(a)\Big)\sum_{j=1}^n N_\infty(X_j),
 \end{equation}
where
\begin{equation}\label{eq:delta}
\delta_d(a) = \begin{cases} 
8\cdot 10^{-5}\left|a - \frac{e_1+e_2}{\sqrt{2}}\right|, & d=1,\\
\frac12\min\left\{10^{-40}\left|a - \frac{e_1+e_2}{\sqrt{2}}\right|, \frac{1}{76}\sum_{j=1}^n a_j^4\right\}, & d=2,\\
\exp\left(\eta_d\sum_{j=1}^n a_j^4\right)-1, & d \geq 3,
\end{cases}
\end{equation}
with a positive constant $\eta_d$ which depends only on $d$.

In particular, in dimensions $d = 1, 2$ equality holds in \eqref{eq:min-EPI-Rd} if and only if there are two indices $k, l$, a measurable set $A$ in $\R^d$ of positive finite measure and a vector $v$ in $\R^d$ such that $X_k$ is uniform on $A$, $X_l$ is uniform on $v - A$ and the $X_i$, $i \neq k, l$ are each almost surely constant. Moreover, in dimensions $d \geq 3$, inequality \eqref{eq:min-EPI-Rd} is always strict.
\end{theorem}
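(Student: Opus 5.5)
The plan is to follow the Madiman--Melbourne--Xu route and insert stability at the one place where a sharp geometric inequality is used. By the vector Rogozin principle of \cite{MMX}, each $X_j$ may be replaced by a random vector $U_j$ uniform on a centred Euclidean ball $B_j$ with $M(U_j)=M(X_j)$, i.e.\ $|B_j|^{-2/d}=N_\infty(X_j)$. This replacement can only decrease $N_\infty$ of the sum, so $M(\sum X_j)\le M(\sum U_j)$. Constant summands ($N_\infty(X_j)=0$) are handled as degenerate balls, which matches $a_j=0$.

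The density of $\sum U_j$ is maximal at the origin by symmetry and log-concavity. It equals the $(n-1)d$-dimensional volume of a block section of the product $B_1\times\dots\times B_n$: after rescaling every $B_j$ to the unit-volume ball, we get the section of $(B^d)^n$ (normalised) by the subspace $\{(x_1,\dots,x_n): \sum a_jx_j=0\}$. Here $a$ is exactly the unit vector in the statement, with the $d$-dimensional blocks weighted by $a_j$. Writing $S_d(a)$ for this normalised section volume, we have
\[
N_\infty\Big(\sum X_j\Big)\ \ge\ S_d(a)^{-2/d}\sum_j N_\infty(X_j),
\]
and Brzezinski's bound \cite{Brz} says $S_d(a)^{-2/d}\ge c_d$ (Ball's bound $\sqrt2$ when $d=1$). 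Theorem \ref{thm:min-EPIs} therefore reduces to a stability version of the block-section inequality, $S_d(a)^{-2/d}\ge c_d(1+\delta_d(a))$. This is the stability result advertised in the abstract.

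To prove it, I would write the section volume as a Fourier integral:
\[
S_d(a)=\int_{\R^d}\prod_j \widehat{\mathbf 1_{B}}(a_j\xi)\,\dd\xi
\]
up to normalisation, where $\widehat{\mathbf 1_B}$ is a Bessel function. I would then apply H\"older, as Brzezinski does, which gives
\[
S_d(a)\le\prod_j\Big(\int |\widehat{\mathbf 1_B}(a_j\xi)|^{a_j^{-2}}\dd\xi\Big)^{a_j^2}=\prod_j \Psi(a_j^{-2})^{a_j^2},
\]
so everything comes down to sharp bounds on the integral function $\Psi(p)$. There are two regimes.

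\emph{Small coefficients} ($a_1\le$ some threshold). Here I would use a quantitative form of the Nazarov--Podkorytov-type inequality $\Psi(p)\le\Psi(\infty)(1-\eta/p^2)$, i.e.\ a strict gap of order $1/p^2=a_j^4$ compared with the Gaussian limit. The Gaussian limit is what produces the constant $c_d$, since $\Psi(p)^{1/p}\to$ the Gaussian value. Taking the product gives
\[
S_d(a)^{-2/d}\ge c_d\exp\Big(\eta\sum a_j^4\Big).
\]
For $d\ge3$ this estimate holds for all $p\ge1$: the borderline $p=2$ is not extremal there, because $c_d$ is strictly below the $n=2$ value, so the Bessel $L_p$ estimate is strict uniformly. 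This gives $\delta_d$ for $d\ge3$ and strictness at once.

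\emph{Large coefficient} ($a_1$ big). If $a_1>1/\sqrt2$, I would use the direct projection bound: the section volume is controlled by the largest block, which gives a large gain. If $a_1\le 1/\sqrt2$ and $a$ lies near $(e_1+e_2)/\sqrt2$ (only relevant for $d=1,2$, where the $n=2$ configuration is extremal), I need \emph{local} stability near that point. For $d=1$ I would invoke or redo Melbourne--Roberto. For $d=2$ I would perturb around the two-block case, showing that the first variation in the direction $a-(e_1+e_2)/\sqrt2$ is linear with a definite sign. This matches the linear $|a-\tfrac{e_1+e_2}{\sqrt2}|$ form of $\delta$. The regimes are then glued by compactness-free explicit estimates, which is the source of constants like $10^{-40}$.

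The equality statements then follow. $\delta_d(a)=0$ forces $a=(e_1+e_2)/\sqrt2$ when $d=1,2$, and is impossible when $d\ge3$. In the $d=1,2$ case, equality in the Rogozin step forces $X_k$ and $X_l$ to be uniform, with $X_l$ uniform on a reflection $v-A$, and the remaining summands to be constant. The main obstacle is the $d=2$ local analysis near the two-block extremiser. There the H\"older step is itself tight, so the gain must be extracted from a second-order Bessel-integral expansion with explicit constants.
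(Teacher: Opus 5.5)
Your reduction is the same as the paper's. It uses the Rogozin--Madiman--Melbourne--Xu symmetrisation, the block-section formula for $f_{\sum a_jU_j}(0)$, and Fourier inversion with H\"older at exponents $p_j=a_j^{-2}$. It also splits off $a_1\ge 1/\sqrt2$ and handles it via $\|f_a\|_\infty\le a_1^{-d}\le 2^{d/2}<C_d=c_d^{-d/2}$.

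\textbf{The gap.} It lies in the one genuinely new ingredient, the quantitative Bessel-integral bound for $d\ge3$.

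\emph{First, the bound is stated with the wrong order.} With your normalisation, $\Psi(p)=\int|\widehat{\mathbf 1_B}(p^{-1/2}\xi)|^p\,\dd\xi$ enters as $\prod_j\Psi(p_j)^{1/p_j}$. A deficit $1-\eta/p^2$ therefore only gives $\prod_j(1-\eta a_j^4)^{a_j^2}\le\exp(-\eta\sum_j a_j^6)$. For equal weights this is a gain of order $n^{-2}$, not $n^{-1}$, so it does not yield the theorem. What is needed, and true, is a deficit of order $1/p$: $\mathcal I_d(p)\le C_dp^{-d/2}(1-\varepsilon_d/p)$ for $p\ge2$. After rescaling this becomes $C_d(1-\varepsilon_da_j^2)$, and then $(1-\varepsilon_da_j^2)^{a_j^2}\le e^{-\varepsilon_da_j^4}$.

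\emph{Second, and more importantly, you give no argument for such a bound.} Plancherel gives $\mathcal I_d(2)=1<2^{-d/2}C_d$ when $d\ge3$, so $p=2$ is not extremal. Fed into the Nazarov--Podkorytov one-crossing property of the distribution functions of $|\varphi_d|$ and the Gaussian $\gamma_d$ (as in \cite{Brz}), this yields only $R_d(p)=p^{d/2}\mathcal I_d(p)/C_d<1$ for each fixed $p\ge2$. That is not uniform, since $R_d(p)\to1$ as $p\to\infty$. The regime $p\to\infty$ is exactly the regime of many small coefficients, where $\sum a_j^4$ is small.

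The missing idea is the precise large-$p$ asymptotics of the Bessel integral. The Laplace-type expansion of \cite{Paris} (cf.\ \cite{KOS}) gives $R_d(p)=1-\beta_d/p+O_d(p^{-2})$ with $\beta_d=\frac{d(d+2)}{4(d+4)}>0$. Combining this with strictness at every finite $p$ and continuity of $R_d$ on $[2,\infty)$ gives $\inf_{p\ge2}p(1-R_d(p))>0$.

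Also, the bound cannot hold for all $p\ge1$ as you claim: $|\varphi_d|^p$ is not integrable for $p\le 2d/(d+1)$. Only $p\ge2$ is needed, which is why $a_1\ge1/\sqrt2$ is split off.

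\textbf{The cases $d=1,2$.} Nothing needs to be redone here. The paper quotes the global stability theorems for cube slicing \cite{ENT} and polydisc slicing \cite{GTW}. The functions $\delta_1,\delta_2$ are then read off via $(\sqrt2-t)^{-2}\ge\frac12(1+\sqrt2t)$ and $(2-t)^{-1}\ge\frac12(1+t/2)$. Your substitute does not by itself give global bounds with these constants. It combines the merely local result of \cite{MR} with an unspecified first-variation argument for $d=2$, and leaves the intermediate regimes unaddressed.
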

\begin{proof}
Put $\beta = \sum_{j=1}^n N_\infty(X_j)$. Let $U_1, U_2, \dots$ be i.i.d. random vectors, each uniform on the \emph{unit-volume} Euclidean ball $\bar B_2^d$ in $\R^d$. This normalisation is chosen so that $N_\infty(U_j) = 1$. The scaled vectors $Y_j = a_jU_j$ then satisfy
\[ 
N_\infty(Y_j) = a_j^2N_\infty(U_j) = a_j^2 = \frac{N_\infty(X_j)}{\beta} = N_\infty\left(\frac{1}{\sqrt{\beta}}X_j\right).
 \]
The aforementioned Rogozin-Madiman-Melbourne-Xu's symmetrisation principle gives
\begin{equation}\label{eq:N-bound}
\frac{1}{\beta}N_\infty\left(\sum_{j=1}^n X_j\right)  = N_\infty\left(\sum_{j=1}^n \frac{1}{\sqrt{\beta}}X_j\right) \geq N_\infty\left(\sum_{j=1}^n Y_j\right) = \left\|f_{\sum a_jU_j}\right\|_\infty^{-2/d}.
 \end{equation}
Here, $f_{\sum a_jU_j}$ denotes the probability density of the sum $\sum a_jU_j$ which is even and log-concave; consequently, it attains
its maximum at the origin, i.e. $\left\|f_{\sum a_jU_j}\right\|_\infty = f_{\sum a_jU_j}(0)$. Note also that $\sum a_j^2 = 1$. That value of the density at the origin then has a natural interpretation as the volume of section of the unit-volume ball in the $\ell_\infty(\ell_2)$ space in $(\R^d)^n$ by a ``block'' subspace 
\[ 
H_a = \left\{x = (x_1, \dots, x_n) \in \R^d \times \dots \times \R^d, \ \sum_{j=1}^n a_jx_j = 0\right\}
 \]  
(of codimension $d$),
\[ 
f_{\sum a_jU_j}(0) = \vol_{nd-d}\Big(\bar B_2^d \times \dots \times \bar B_2^d \cap H_a\Big).
 \]
 We refer e.g. to \cite{Brz} for details. In particular, in dimensions $d=1, 2$, this is the volume of section of the cube and polydisc, respectively (see \cite{Ball-cube, OP}). The results of \cite{ENT,GTW, Brz} assert that, respectively,
 \[ 
\vol_{nd-d}\Big((\bar B_2^d \times \dots \times \bar B_2^d) \cap H_a\Big) \leq \begin{cases}
\sqrt{2} - 6\cdot 10^{-5}\left|a - \frac{e_1+e_2}{\sqrt{2}}\right|, & d = 1, \\
2 - \min\left\{10^{-40}\left|a - \frac{e_1+e_2}{\sqrt{2}}\right|, \frac{1}{76}\sum_{j=1}^n a_j^4\right\}, & d = 2, \\
c_d^{-\frac{d}{2}}, & d \geq 3.
\end{cases}
 \]
Our new ingredient is the following strengthening for $d \geq 3$, naturally complementing these works.
\begin{lemma}\label{lm:Brz-deficit}
For $d \geq 3$ there is a positive constant $\kappa_d$ such that for all unit vectors $a$ in $\R^n$, we have
\[ 
\vol_{nd-d}\Big((\bar B_2^d \times \dots \times \bar B_2^d) \cap H_a\Big) \leq c_d^{-\frac{d}{2}}\exp\left(-\kappa_d\sum_{j=1}^n a_j^4\right).
 \]
\end{lemma}
We postpone its proof.
It remains to plug in these volume estimates back into \eqref{eq:N-bound}, combined with the cosmetic elementary inequalities $(\sqrt{2}-t)^{-2} \geq \frac12(1+\sqrt2 t)$ and $(2-t)^{-1} \geq \frac12(1+t/2)$, $0 < t <1$, when $d=1, 2$, respectively. For $d \geq 3$, Lemma \ref{lm:Brz-deficit} yields $\eta_d = \frac{2}{d}\kappa_d$ and $\sum a_j^4 > 0$ justifies the strictness.

The equality conditions in dimensions $d=1, 2$ are justified by noting that then necessarily $\delta_d(a) = 0$, implying $a_1 = a_2 = \frac{1}{\sqrt2}$ and $a_j = 0$, $3 \leq j \leq n$. The rest of the argument proceeds as in \cite{MR} (done for $d=1$, which can be repeated verbatim for $d = 2$, and for completeness, we include that in the appendix).
\end{proof}

%We conjecture that in dimensions $d \geq 3$, inequality \eqref{eq:min-EPI-Rd} only admits an asymptotic ``extremiser'', and this naturally motivates a conjecture that Brzezinski's result admits a global stability result with the deficit term $\delta(a) = \sum_{j=1}^{n} a_j^4$, which will be explored elsewhere.

\begin{proof}[Proof of Lemma \ref{lm:Brz-deficit}]
Put
\[
C_d=c_d^{-d/2}
=\frac{(1+d/2)^{d/2}}{\Gamma(1+d/2)}
\]
and let $U$ be uniformly distributed on $\bar B_2^d$. We first establish a
quantitative form of the Bessel integral estimate used in \cite{Brz}. Writing
\[
\varphi_d(t)=\E e^{i\langle t,U\rangle},
\qquad
\mathcal I_d(p)
=(2\pi)^{-d}\int_{\R^d}|\varphi_d(t)|^p\dd t,
\]
we claim that there is $\varepsilon_d>0$ such that
\begin{equation}\label{eq:Bessel-deficit}
\mathcal I_d(p)
\leq
C_dp^{-d/2}\left(1-\frac{\varepsilon_d}{p}\right),
\qquad p\geq2.
\end{equation}

Let $\rho_d$ be the radius of $\bar B_2^d$. We define the relevant Gaussian function
\[
\gamma_d( t)
=
\exp\left(-\frac{\sigma_d^2| t|^2}{2}\right), \qquad \sigma_d=\frac{\rho_d}{\sqrt{d+2}}.
\]
It is chosen such that
\[
(2\pi)^{-d}\int_{\R^d}\gamma_d( t)^p\dd t
=
C_dp^{-d/2}.
\]
The argument in \cite{Brz}, based on the Nazarov--Podkorytov method
\cite{NP}, gives the following one-crossing property for the distribution
functions of $|\varphi_d|$ and $\gamma_d$. If
\[
F(s)=\vol_d\{ t\in\R^d:|\varphi_d( t)|>s\},
\qquad
G(s)=\vol_d\{ t\in\R^d:\gamma_d( t)>s\},
\]
then, for some $s_0\in(0,1)$, the difference $D=F-G$ is nonnegative on
$(0,s_0)$ and nonpositive on $(s_0,1)$.

Since $\bar B_2^d$ has volume one, Plancherel's identity gives
\[
(2\pi)^{-d}\int_{\R^d}|\varphi_d( t)|^2\dd t=1,
\]
whereas
\[
(2\pi)^{-d}\int_{\R^d}\gamma_d( t)^2\dd t
=
2^{-d/2}C_d>1,
\qquad d\geq3.
\]
The last inequality follows by putting
$q_d=C_d/2^{d/2}$ and observing that
\[
q_3=\frac{5^{3/2}}{6\sqrt{\pi}}>1,
\qquad
q_4=\frac98>1, \qquad
\frac{q_{d+2}}{q_d}
=
\frac12
\left(1+\frac{2}{d+2}\right)^{(d+2)/2}>1.
\]
Consequently,
\[
\int_0^1sD(s)\dd s<0.
\]
For $p>2$, the layer-cake formula and the one-crossing property yield
\[
\begin{split}
\int_0^1s^{p-1}D(s)\dd s
={}&
\int_0^1s\bigl(s^{p-2}-s_0^{p-2}\bigr)D(s)\dd s\\
&\qquad
+s_0^{p-2}\int_0^1sD(s)\dd s<0.
\end{split}
\]
Indeed, the first term is nonpositive because its two nonzero factors have
opposite signs on either side of $s_0$, and the second term is strictly
negative. It follows that
\begin{equation}\label{eq:Bessel-strict}
\mathcal I_d(p)<C_dp^{-d/2},
\qquad p\geq2.
\end{equation}

We next quantify the strictness as $p\to\infty$. We will crucially rely on results from \cite{KOS, Paris}, thus we rewrite the relevant functions in their notation. With $J_\nu$, as usual denoting the Bessel function of the first kind, we have
\[
\varphi_d( t)
=
\frac{\Gamma(1+\nu)J_\nu(\rho_d| t|)}{(\rho_d| t|/2)^\nu}, \qquad \nu = \frac{d}{2}.
\]
Passing to polar coordinates gives
\[
\mathcal I_d(p)
=
\frac{2^{1-2\nu}}{\Gamma(\nu)\Gamma(1+\nu)}
L(\nu;p),
\]
where, in the notation of \cite{Paris},
\[
L(\nu;p)
=
\int_0^\infty
\left|
\frac{\Gamma(1+\nu)J_\nu(r)}{(r/2)^\nu}
\right|^p
r^{2\nu-1}\dd r.
\]
We emphasise that the exponent $p$ is explicitly allowed to be nonintegral in \cite{Paris}. His Laplace-type expansion therefore holds as $p\to\infty$ through arbitrary real values and gives
\[
L(\nu;p)
=
2^{2\nu-1}(1+\nu)^\nu\Gamma(\nu)p^{-\nu}
\left(
1-\frac{\nu(\nu+1)}{2(\nu+2)}\frac1p
+O_\nu(p^{-2})
\right).
\]
Consequently,
\[
\mathcal I_d(p)
=
C_dp^{-d/2}
\left(
1-\frac{\beta_d}{p}+O_d(p^{-2})
\right),
\qquad
\beta_d=\frac{d(d+2)}{4(d+4)}>0.
\]

Define
\[
R_d(p)=\frac{p^{d/2}}{C_d}\mathcal I_d(p).
\]
By \eqref{eq:Bessel-strict}, $R_d(p)<1$ for every $p\geq2$, while the
preceding expansion gives
\[
\lim_{p\to\infty}p\bigl(1-R_d(p)\bigr)=\beta_d.
\]
Moreover, $R_d$ is continuous on $[2,\infty)$ by dominated convergence,
since $|\varphi_d|\leq1$ and $|\varphi_d|^2$ is integrable. It follows,
first on a sufficiently large tail and then on the remaining compact
interval, that
\[
\varepsilon_d =\min\left\{1, \inf_{p\geq2}p\bigl(1-R_d(p)\bigr)
\right\}>0.
\]
This proves \eqref{eq:Bessel-deficit}.

We now turn to the section estimate. Let $U_1,\ldots,U_n$ be independent
copies of $U$, let $f_a$ be the density of
\[
S_a=\sum_{j=1}^na_jU_j,
\]
The probabilistic formula for sections gives
\[
f_a(0)
=
\vol_{nd-d}\Bigl((\bar B_2^d)^n\cap H_a\Bigr).
\]
We define
\[
\kappa_d
=
\min\left\{
\varepsilon_d,
\log\frac{C_d}{2^{d/2}}
\right\}>0.
\]

Suppose first that $|a_k|\geq1/\sqrt2$ for some $k$. Convolution with a
probability measure does not increase the $L_\infty$-norm, and hence
\[
f_a(0)
\leq
\|f_a\|_\infty
\leq
\|f_{a_kU_k}\|_\infty
=
|a_k|^{-d}
\leq
2^{d/2}
\leq
C_de^{-\kappa_d\Delta(a)},
\]
where we have put
\[
\Delta(a)=\sum_{j=1}^na_j^4.
\]
Note that since $a$ is of unit length, $\Delta(a) \leq 1$, which, together with the definition of $\kappa_d$ justifies the last inequality above.

It remains to consider the case
\[
|a_j|<\frac1{\sqrt2}
\]
for all $j$ (without loss of generality, we can assume that the $a_j$ are nonzero, otherwise, we simply skip those indices). Put $p_j=a_j^{-2}>2$, so that $\sum \frac{1}{p_j} = 1$.
Fourier inversion and H\"older's inequality thus give
\[
f_a(0)
\leq
\prod 
\left(
(2\pi)^{-d}
\int_{\R^d}
|\varphi_d(a_j t)|^{p_j}\dd t
\right)^{1/p_j}.
\]
By a change of variables and \eqref{eq:Bessel-deficit},
\[
\begin{split}
(2\pi)^{-d}
\int_{\R^d}
|\varphi_d(a_j t)|^{p_j}\dd t =
|a_j|^{-d}\mathcal I_d(p_j) \leq
|a_j|^{-d}C_dp_j^{-d/2}
\left(1-\frac{\varepsilon_d}{p_j}\right) =
C_d(1-\varepsilon_da_j^2).
\end{split}
\]
Therefore, 
\[
\begin{split}
f_a(0) \leq
C_d\prod_j(1-\varepsilon_da_j^2)^{a_j^2} \leq
C_d\exp\left(
-\varepsilon_d\sum_{j=1}^na_j^4
\right) \leq
C_d\exp\left(
-\kappa_d\sum_{j=1}^na_j^4
\right).
\end{split}
\]
This finishes the proof.
\end{proof}

\appendix

\section*{Appendix: Equality conditions in \eqref{eq:min-EPI-Rd}}\label{app2}

Once
\[
a_1=a_2=\frac{1}{\sqrt{2}},
\qquad
a_j=0,\quad j\geq 3,
\]
has been established (as a consequence of $\delta(a) = 0$), the remaining equality analysis works in every dimension. We repeat the Melbourne-Roberto argument from \cite{MR}.

In the normalised setting, let $f,g$ be the densities of the two nondegenerate summands and let
\[
\|f\|_\infty=\|g\|_\infty=m.
\]
Equality gives
\[
\|f*g\|_\infty=m.
\]
Since $f*g$ is continuous and vanishes at infinity, it attains its maximum, say at $v\in\mathbb R^d$, and
\[
m=(f*g)(v)
=\int_{\mathbb R^d}f(x)g(v-x)\,dx
\leq m\int_{\mathbb R^d}f(x)\,dx
=m.
\]
Consequently,
\[
g(v-x)=m
\]
for $f$-almost every $x$. Let
\[
A=\{x\in\mathbb R^d:f(x)>0\},
\qquad
B=\{x\in\mathbb R^d:g(x)=m\}.
\]
Since $f\leq m$ and $\int f=1$, whereas $g=m$ on $B$ and $\int g=1$, we have
\[
|A|\geq\frac{1}{m},
\qquad
|B|\leq\frac{1}{m}.
\]
Moreover, the preceding equality implies
\[
v-A\subset B
\]
modulo null sets. Hence
\[
\frac{1}{m}\leq |A|=|v-A|\leq |B|\leq\frac{1}{m},
\]
so that
\[
|A|=|B|=\frac{1}{m}.
\]
Since $0\leq f\leq m$, $f$ is supported on $A$, and $\int f=1=m|A|$, it follows that
\[
f=m\mathbf 1_A
\qquad\text{a.e.}
\]
Similarly,
\[
g=m\mathbf 1_B
=m\mathbf 1_{v-A}
\qquad\text{a.e.}
\]
Thus the two nondegenerate summands are uniform on $A$ and $v-A$, respectively.

\subsection*{Acknowledgements.}
Part of this work was developed during TT's visit to the Chinese University of Hong Kong, gratefully appreciating Chandra Nair's hospitality and the welcoming environment provided by the Department of Information Engineering.

%\subsection*{AI statement.}
%The authors used generative AI tools solely for language editing and proofreading.


\begin{thebibliography}{9}


\bibitem{AG}
Ambrus, G., Gárgyán, B.,
Minimal central slices of the regular simplex.
Preprint (2026), arXiv:2609.12714.


\bibitem{ABBN1}
Artstein, S., Ball, K., Barthe, F., Naor, A.,
Solution of Shannon's problem on the monotonicity of entropy.
J. Amer. Math. Soc. 17 (2004), no. 4, 975--982.

\bibitem{ABBN2}
Artstein, S., Ball, K., Barthe, F., Naor, A.,
On the rate of convergence in the entropic central limit theorem. (English summary)
Probab. Theory Related Fields 129 (2004), no. 3, 381--390.


\bibitem{Ball-cube}
Ball, K.,
Cube slicing in $\R^n$. 
Proc. Amer. Math. Soc. 97 (1986), no. 3, 465--473.



\bibitem{BC}
Baernstein, A., II, Culverhouse, R., Majorization of sequences, sharp vector Khinchin inequalities, and bisubharmonic functions. Studia Math. 152 (2002), no. 3, 231--248.

\bibitem{BCh} 
Bobkov, S. G., Chistyakov, G. P.,
Entropy power inequality for the R\'enyi entropy,
IEEE Trans. Inform. Theory, 61 (2015), no. 2, 708--714. 


\bibitem{BCG-Fisher}
Bobkov, S. G., Chistyakov, G. P., Götze, F.,
Fisher information and the central limit theorem. Probab. Theory Related Fields 159 (2014), no. 1--2, 1--59.


\bibitem{BCG-book}
Bobkov, S. G., Chistyakov, G. P., Götze, F.,
Concentration and Gaussian approximation for randomized sums.
Probability Theory and Stochastic Modelling, 104. Springer, Cham, 2023.


\bibitem{BCh-max}
Bobkov, S. G., Chistyakov, G. P.,
Bounds on the maximum of the density for sums of independent random
variables.
J. Math. Sci. (N.Y.) 199 (2014), no.~2, 100--106.


\bibitem{BLM}
Boucheron, S., Lugosi, G., Massart, P., Concentration inequalities. A nonasymptotic theory of independence. With a foreword by Michel Ledoux. Oxford University Press, Oxford, 2013.


\bibitem{BP1}
Brazitikos, S., Pandis, C.,
Sharp inequalities for symmetric polynomials, Hunter's conjecture, and moments of exponential random variables.
Preprint (2025), arXiv:2512.12254.


\bibitem{BP2}
Brazitikos, S., Pandis, C.,
Restricted hyperplane sections of the cross-polytope and the simplex.
Preprint (2026), arXiv:2606.07163.


\bibitem{BTT}
Brazitikos, S., Tang, C., Tkocz, T.,
Moments of sums of exponentials, beyond CHS.
Preprint (2026), arXiv:2602.03058.



\bibitem{Brz}
Brzezinski, P.,
Volume estimates for sections of certain convex bodies.
Math. Nachr. 286 (2013), no. 17--18, 1726--1743.

\bibitem{CGT}
Chasapis, G., Gurushanka, K., Tkocz, T., Sharp bounds on p-norms for sums of independent uniform random variables, $0<p<1$. J. Anal. Math. 149 (2023), no. 2, 529--553. 

\bibitem{CST}
Chasapis, G., Singh, S., Tkocz, T.,
Haagerup's phase transition at polydisc slicing. Anal. PDE 17 (2024), no. 7, 2509--2539. 

\bibitem{CT}
Cover, T. M., Thomas, J. A.,
Elements of information theory,
second ed., Wiley-Interscience, Hoboken, NJ, 2006.


\bibitem{EG}
Eskenazis, A.,  Gavalakis, L.,
On the entropy and information of Gaussian mixtures.
Mathematika 70 (2024), no. 2, Paper No. e12246, 19 pp.


\bibitem{ENT-GM}
Eskenazis, A., Nayar, P., Tkocz, T.,
Gaussian mixtures: entropy and geometric inequalities. Ann. Probab. 46 (2018), no. 5, 2908--2945.


\bibitem{ENT}
Eskenazis, A., Nayar, P., Tkocz, T.,
Resilience of cube slicing in $\ell_p$.
\emph{Duke Math. J.} 173 (2024), no.~17, 3377--3412.

\bibitem{GTW}
Glover, N., Tkocz, T., Wyczesany, K.,
Stability of polydisc slicing.
Mathematika 69 (2023), no.~4, 1165--1182.



\bibitem{Jarr}
Jarrett, R. G.,
Bounds and expansions for Fisher information when the moments are known.
Biometrika 71 (1984), no. 1, 101--113.


\bibitem{KOS}
Kerman, R., Oľhava, R., Spektor, S.,
An asymptotically sharp form of Ball's integral inequality. 
Proc. Amer. Math. Soc. 143 (2015), no. 9, 3839--3846.



\bibitem{KK}
K\"onig, H., Kwapie\'n, S.,
Best Khintchine type inequalities for sums of independent, rotationally invariant random vectors.
Positivity 5 (2001), no. 2, 115--152.



\bibitem{LO-unif}
Lata\l a, R., Oleszkiewicz, K.,
A note on sums of independent uniformly distributed random variables.
Colloq. Math. 68 (1995), no. 2, 197--206.


\bibitem{Led}
Ledoux, M.,
The concentration of measure phenomenon.
Mathematical Surveys and Monographs, 89. American Mathematical Society, Providence, RI, 2001.




\bibitem{MMX}
Madiman, M., Melbourne, J., Xu, P., 
Rogozin's convolution inequality for locally compact groups.
Preprint (2017): arXiv:1705.00642.



\bibitem{MR}
Melbourne, J., Roberto, C.,
Quantitative form of Ball's cube slicing in $\R^n$ and equality cases in the min-entropy power inequality.
Proc. Amer. Math. Soc. 150 (2022), no. 8, 3595--3611.



\bibitem{MRTT}
Melbourne, J., Roysdon, M., Tang, C., Tkocz, T.,
From simplex slicing to sharp reverse Hölder inequalities.
J. Lond. Math. Soc. (2) 113 (2026), no. 3, Paper No. e70461, 23 pp.



\bibitem{MTTT}
Myroshnychenko, S., Tang, C., Tatarko, K., Tkocz, T.,
Stability of simplex slicing.
Discrete Comput. Geom. 76 (2026), no. 1, 491–507.

\bibitem{NP}
Nazarov, F.L., Podkorytov, A.N., Ball, Haagerup, and distribution functions, in Complex Analysis, Operators, and Related Topics, Oper. Theory Adv. Appl. 113 (2000), 247–267.

\bibitem{OP}
Oleszkiewicz, K., Pe\l czy\'nski, A.,
Polydisc slicing in $\mathbb{C}^n$.
Studia Math. 142 (2000), no. 3, 281--294.


\bibitem{Paris}
Paris, R. B.,
Asymptotics of some generalised sine-integrals.
Preprint (2021), arXiv:2011.05156.


\bibitem{RTW}
Rapaport, M., Tkocz, T., Wu, I.,
Negative Moments of Steinhaus Sums.
Preprint (2025), arXiv:2512.09077.


\bibitem{Renyi1961}
R{\'e}nyi, A.,
On measures of entropy and information.
In \emph{Proceedings of the Fourth Berkeley Symposium on Mathematical
Statistics and Probability}, Vol.~1, University of California Press,
Berkeley, 1961, 547--561.




\bibitem{Rog}
Rogozin, B. A., An estimate for the maximum of the convolution for bounded densities, Teor. Veroyatn. Primen., 32 (1987), 53--61.


\bibitem{Stam}
Stam, A. J.,
Some inequalities satisfied by the quantities of information of Fisher and Shannon.
Information and Control 2 (1959), 101--112.



\bibitem{Tang}
Tang, C., 
Simplex slicing: An asymptotically-sharp lower bound,
Adv. Math. 451 (2024), 109784.


\bibitem{Webb}
Webb, S., 
Central slices of the regular simplex. 
Geom. Dedicata 61 (1996), no. 1, 19--28.


\end{thebibliography}
\end{document}